\pgfplotsset{compat=newest}
\def\E{\mathbb{E}}
\def\P{\mathbb{P}}
\def\ie{{\em i.e.}}
\def\R{\mathbb{R}}
\def\one{\mathbf{1}}
\def\d{\mathrm{d}}
\def\I{\mathbf{I}}
\def\sinr{\mathtt{SINR}}
\newtheorem{lemma}{Lemma}{}
  \newtheorem{thm}{Theorem}
  \newtheorem{theorem}{Theorem}
  \newtheorem{cor}[thm]{Corollary}
\title{Coverage Analysis in
	Millimeter Wave Cellular Networks with Reflections}
\author{\IEEEauthorblockN {Aroon Narayanan,  Sreejith T. V and  Radha Krishna Ganti} \\
\IEEEauthorblockA{Department of Electrical Engineering\\
Indian Institute of Technology, Madras\\
Chennai, India 600036\\
\{ee12b077, ee16ipf01, rganti\}@ee.iitm.ac.in}
}
\begin{document}
	\maketitle
	
\begin{abstract}	
The coverage probability of a user in a mmwave system  depends on the availability of line-of-sight paths or  reflected paths from any base station. Many prior works modelled blockages using random shape theory and analyzed the SIR distribution with and without interference.  While, it is intuitive that  the reflected paths do not significantly contribute to the coverage (because of longer path lengths), there are no works which provide a model and study the coverage with reflections. In this paper, we  model and analyze  the impact of reflectors using stochastic geometry. We observe that the reflectors have very little impact on the coverage probability.  

\end{abstract}

\section{Introduction}

Current cellular systems predominantly operate in the 1-6 GHz range of spectrum. In these frequencies, radio signals can propagate around an object, and it supports radio communication when a mobile device is blocked or shadowed by an obstruction. The next generation of wireless standards are looking at higher operating frequencies, mainly due to spectrum availability. Millimeter wave (mmWave) spectrum is the range of frequencies from 28-90 GHz, and is being envisioned  to augment the existing frequencies in the 5G standard \cite{pi2011}. Measurements have reaffirmed the feasibility of mmWave in the urban environment \cite{rap2013} and measurements for indoor communication at mmWave frequencies show that it holds promise with indoor stations \cite{rap2016}. Diffraction is a powerful propagation mechanism in today's 3G and 4G cellular systems but becomes very lossy at mmWave frequencies due to the small wavelengths of these bands. However, scattering and reflection become dominant at mmWave frequencies \cite{pi2011}.  Also, mmWave communication has been shown to be more sensitive to propagation loss than current modes of communication \cite{al2008}.

As has been shown in earlier works \cite{rap2002, bend2011}, first order reflections, \ie, paths from one point to another using one reflector, and second order reflections, \ie, paths from one point to another using two reflector, are important features at millimeter wave frequencies, especially by metallic objects. A later work \cite{raj2012} finds that well-known lossy objects such as human body and concrete are good reflectors at mmWave frequencies, enabling the receiver to capture secondary reflections for non-line-of-sight communication. Many other common objects have been shown in these works as having high reflection coefficients, which makes them a useful component of signal processing. Measurements for mmWave have revealed that the path loss characteristics for LOS and NLOS links are considerably different \cite{rap2013} \cite{rap20132} \cite{raj2012}.

The coverage of mmwave systems with blockages is analyzed in \cite{ak2014} using statistical models and in \cite{bai2012} \cite{bai2014} using tools from stochastic geometry. However, in these works, reflectors are not considered and only blockages are taken into account. However, as shown in  \cite{rap2002}, reflections   can contribute to the signal power, particularly if the LOS path is blocked. A later work \cite{bai2015} incorporated NLOS communication as well, but this model did not incorporate tractable randomly-placed reflectors.

In this paper, we look at the coverage in a mmwave system with both blockages and reflectors. Similar to the blockage model, we introduce a stochastic model for reflectors and analyse the coverage (SINR distribution). 
%
%

Section \ref{sec:Sysmodel} of the paper characterizes the system model. Section \ref{sec:Dist} discusses how the distance distributions of the reflected and direct paths can be derived. Section \ref{sec:Cov} contains the derivation of the coverage probability. Section \ref{sec:Res} discusses the results of the paper.
\section{System Model}\label{sec:Sysmodel}
We consider a wireless network on the 2-D plane in the
presence of both line-of-sight signal blockages and reflectors. Our main motivation is to characterise the effects of reflectors on the coverage probability of  a typical user.
\subsection{Base Stations}
The locations of the mmwave base stations are modelled by a spatial Poisson Point Process (PPP), $\Phi \subset \R^2$, with density $\lambda$.  A standard path loss model $l(x)=\|x\|^{-\alpha}$, $\alpha>2$, is assumed.  For any pair of nodes $x$ and $y$, independent Rayleigh fading (power) with unit mean is assumed  and is denoted by $h_{xy}$. 
The noise term is assumed to be circularly symmetric complex Gaussian noise (AWGN) with zero mean and variance $\sigma^2$.

\subsection{Blockages and Reflectors}
Random shape theory is used in this paper to populate the environment with objects. There are two types of objects - blockages and reflectors - and both of these are taken as 2D straight line segments with random  length and orientation that we describe later. Blockages are objects that are obstruction to any path, LOS or NLOS, between a base station and a user, but these do not reflect signals \ie, their reflection coefficient is zero. Reflectors are objects that can reflect the signals \ie, their reflection coefficient is non-zero. All objects in this model, both blockages and reflectors, have zero transmission coefficient, which means that all of them will attenuate to zero any signal that attempts to {\em pass through them}. This model of blockages has been used in \cite{bai2012}.

The centres of the objects form a spatial PPP $\Phi_o$ of density $\lambda_o$ of which a fraction $\delta$  are able to reflect the signals, \ie, the density of reflecting objects, $\lambda_R=\delta\lambda_o$ and that of blockages is $\lambda_R=(1-\delta)\lambda_o$. Hence the centres of the reflectors form a spatial PPP $\Phi_R$ of density $\lambda_R$ which is a thinned version of $\Phi_o$ and the centres of the blockages form a spatial PPP $\Phi_B$ of density $\lambda_B$.  Both the reflectors and blockages are assumed to be lines segments with random length $l$ and orientated at angle $\theta$ with the radial line from user to centres of the objects as shown in Fig.\ref{fig:deformPPP}. The dimension  $l$ is uniformly distributed in $[L_1, L_2]$, and the orientation of line segments $\theta$ is uniformly distributed in $[0, 2\pi)$. 

\subsection{Association Strategy}

A user can connect to a BS either by a direct path or a reflected path. We assume the user always connected to the BS which is having the shortest distance either through the direct path which is visible or through indirect path provided by the reflectors. Hence, the received signal-to-interference plus noise ratio ($\sinr$) is given by,

\begin{align}
	\one(r_d<r_r) \frac{P_t |h_d|^2r_d^{-\alpha}}{\sigma^2+I_D + I_R}+\one(r_r<r_d)\frac{P_t |h_r|^2r_r^{-\alpha}}{\sigma^2+I_D + I_R},
\end{align}
where  $r_d$ is the distance to the nearest visible BS and $r_r$ is the length of the shortest reflected path from any BS  through a reflector. Here $\one(.)$ is the indicator function and  $h_d$ and $h_r$ are the fading coefficients. $I_R$ and $I_D$ are the interference due to the reflected paths and the direct paths, respectively. 

\section{Distance Distributions}\label{sec:Dist}
In this section, we will describe the distribution of  $r_r$ and $r_d$ distances which characterize the $\sinr$.
\subsection{Distribution of shortest direct path}
When the BS density is $\lambda$ and that of blockages is $\lambda_B$, the distribution of distance to the closest visible base station, $r_d$, is derived in \cite{bai2012} and is given by,
\[
f_{R_d}(r_d) = 2\pi \lambda r_d\exp(-\lambda\pi r_d^2 S(r_d)-\beta r_d), 
\] where $\beta=2(\lambda_B+\lambda_R) L_b/\pi$  $= 2\lambda_o L_b/\pi$. Here  $L_b = \E[l] $ is the expected length of the blockage  and $S(x)=\frac{2}{\beta^2 x^2}(1-e^{-x \beta}(1+\beta x))$.
They have also shown that the probability of blockage depends on the length of the path. For a BS $x$, the probability that its path to the origin is blocked is given by
\begin{align}
	P_b(x) = \exp(-\beta \|x\|).
	\label{eq:block}
\end{align}
 
\subsection{Distribution of distance to nearest visible reflector}
We assume that a reflector is blocked if its centre is not visible from the origin (location of a typical UE). This is a reasonable assumption if the density of the reflectors and the length of the reflectors  is small.  Since the reflectors are distributed according to PPP of density $\lambda_R$ and blockage density $\lambda_B$, the distance to the centre of the closest visible reflector, $d$, is distributed according to:
\[
f_{D}(d) = 2\pi \lambda_R d\exp(-\lambda_R\pi d^2 S(d)-\beta d).
\]  
\subsection{Distribution of shortest reflected path}
\begin{figure}
	\begin{center}
		\begin{tikzpicture}[scale=0.6]
		\draw[fill=gray!20,draw=black!0] (1,3)--(3,3) -- (6,6) -- (2,6)-- cycle;
		\draw[fill=blue!20,draw=black!0] (1,3)--(3,3) -- (8,-1) --(3,-1) -- cycle;
		\coordinate (x) at (0,0);
		\fill (0,0) circle [radius=4pt];
		\node [draw,scale=0.4,regular polygon,regular polygon sides=3,fill=black!100]at (5.7,5) {}; 
		\draw (6.6,5) node  {$x\in\Phi$};
		\node [draw,scale=0.4,regular polygon,regular polygon sides=3,fill=red!80]at (5.7,4.2) {}; 
		\draw (6.6,4.2) node  {$x^\prime\in\Phi^\prime$};
		\fill (5.7,3.5) circle [radius=4pt];
		\draw (6.6,3.5) node  {UE};
		\node [fill,draw,scale=0.3,circle]at (x) {};  
		\draw(-0.5,0.4) node {$O$};
		
		\node [draw,scale=0.3,regular polygon,regular polygon sides=3,fill=black!100]at (3.2,1) {}; 
		\draw(3.6,1) node {$p$};
		\draw(4,0) node {$A_1$};

		\node [draw,scale=0.4,regular polygon,regular polygon sides=3,fill=red!80]at (3,4.5) {};
		
		\draw(3.6,4.5) node {$p\prime$};
		\draw(4.6,5.5) node {$A_2$};
		
		\node [draw,scale=0.3,regular polygon,regular polygon sides=3,fill=black!100]at (4,1.4) {}; 
		\node [draw,scale=0.4,regular polygon,regular polygon sides=3,fill=red!80]at (3.7,4.2) {};
		\node [draw,scale=0.3,regular polygon,regular polygon sides=3,fill=black!100]at (3.2,-0.4) {}; 
		\node [draw,scale=0.4,regular polygon,regular polygon sides=3,fill=red!80]at (3.2,5.6) {};
		\node [draw,scale=0.3,regular polygon,regular polygon sides=3,fill=black!100]at (-0.5,4.4) {};
		\node [draw,scale=0.3,regular polygon,regular polygon sides=3,fill=black!100]at (2,-0.7) {};
		
		\draw (2,3) -- (0,0);
		\draw (2,3) -- (3.2,1);
		\draw (2,3) -- (3,4.5);
		
		\draw (1,3) -- (3,3);
		\draw [dashed] (1,3) -- (1,1.7);
		\draw [dashed] (3,3) -- (3,1.7);
		\draw (0,0) -- (1,3);
		\draw (1,3) -- (3,-1);
		\draw (0,0) -- (3,3);
		\draw (3,3) -- (8,-1);
		\draw (1,3) -- (2,6);
		\draw (3,3) -- (6,6);
		
		\draw(0.7,3.3) node {$A$};
		\draw(3,3.3) node {$B$};
		
		\end{tikzpicture}
	\end{center}
	\caption[]{Illustration of deformed PPP. Here, $AB$ is the reflector which reflects the signal from BS $p$ to the user located at $O$. Original PPP (black triangles) is $\Phi$ and the deformed PPP (red triangle) about $AB$ is denoted by $\Phi^\prime$. The reflected point in the transformed PPP is shown as $p^\prime$. }
	\label{fig:deformPPP}
\end{figure}
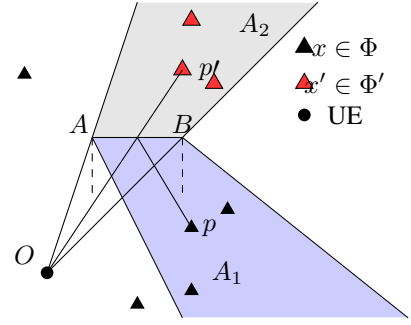

In  Fig. \ref{fig:deformPPP}, $AB$ represents a mirror and $O$ represents the UE receiver at the origin. We first observe that only the BS in the shaded  region  $A_1 \subset \R^2$ can reflect of $AB$ and reach the origin. Signals from BSs in $\R^2\setminus A_1$ cannot reach the origin through the reflector $AB$. The total length that a signal propagates for a reflected path equals the distance to the mirror and the distance from the mirror to the user at the origin.  In order to characterise the distribution of the shortest reflected path, we project each of the possible base stations as shown in Fig. \ref{fig:deformPPP}. The region $A_2$ denotes the mirror image of the region $A_1$ and for computing the length of the reflected path, a BS  $y \in A_1$ can be mirror imaged to $y\prime \in A_2$. The number of reflected base stations in the semi-conic region on the other side will be always equal to the number of base stations on the original side due to symmetry, so the density of base stations of the transformed PPP in the reflected semi-conic region will be $\lambda_B$ in the region $A_2$. These reflected points may or may not be visible to the user, depending on whether there are blockages on the link between the points and the user.  We now characterise the length of the shortest reflected path for a reflector at a distance $d$.

\begin{lemma}\label{lem:rfconD}
	Given $\lambda$, $\lambda_B$ and $\lambda_R$ are the densities of BSs, blockages and reflectors respectively, the conditional CDF of the distance of shorted reflected path, $r_r$, when the nearest reflector is at a distance $d$ is given by,
	\begin{align*} 
	\P(R_f>r_r | d, \theta)= \E_l\exp\left(- \int_{\mathcal{A}(\theta,d)}\lambda e^{-\beta ||x||}\d x\right),
	\end{align*} 
	where $\mathcal{A}(\theta,d)=B(0,r_r)\cap A_2(\theta,d)$. Here $\theta$ represents the orientation of the reflector.
\end{lemma}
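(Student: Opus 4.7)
The proof rests on the mirror-image construction illustrated in Fig.~\ref{fig:deformPPP}. First, I would fix the reflector's orientation $\theta$, centre distance $d$, and length $l$, and note that reflection across the line supporting $AB$ is an isometry mapping the admissible wedge $A_1$ bijectively onto $A_2$, with the property that the physical reflected path from any BS $y\in A_1$ to $O$ has total length equal to $\|y'\|$, where $y'\in A_2$ is the mirror image of $y$. Since $\Phi$ is a homogeneous PPP of intensity $\lambda$ and reflection preserves the intensity measure, the image process $\Phi'$ restricted to $A_2$ is itself a PPP of intensity $\lambda$. Conditioned on $(d,\theta,l)$, the event $\{R_f>r_r\}$ is therefore equivalent to the event that no mirror-image point $x\in\Phi'\cap\mathcal{A}(\theta,d)$ corresponds to an \emph{unblocked} reflected path, with $\mathcal{A}(\theta,d)=B(0,r_r)\cap A_2(\theta,d)$.

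Next, I would apply an independent thinning to $\Phi'\cap\mathcal{A}(\theta,d)$ by the unblocking probability. Because the physical reflected path $y$--$p$--$O$ has total Euclidean length equal to $\|x\|=\|y'\|$, the same random-segment calculation that yields \eqref{eq:block} shows that this path avoids every blockage with probability $e^{-\beta\|x\|}$, independently across distinct BSs. The thinned process is then an inhomogeneous PPP of intensity $\lambda e^{-\beta\|x\|}$ on $\mathcal{A}(\theta,d)$, and $\{R_f>r_r\mid d,\theta,l\}$ is precisely its void event. The PPP void-probability formula immediately gives
\[
\P(R_f>r_r\mid d,\theta,l)=\exp\!\left(-\int_{\mathcal{A}(\theta,d)}\lambda e^{-\beta\|x\|}\,\d x\right).
\]
Averaging over the reflector length $l$, which enters through the extent of the wedge $A_2$, yields the claimed formula.

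The main obstacle is the blockage step, since the physical path $y$--$p$--$O$ is a polyline with a bend at $p$ rather than a single straight segment, so \eqref{eq:block} does not apply verbatim. The justification proceeds via an integral-geometric argument: for a PPP of line segments with isotropic orientation and uniformly bounded length, the expected number of blockage segments intersecting a rectifiable curve depends, to the leading order used to derive \eqref{eq:block}, only on the curve's total length. Hence the unblocking probability along the bent path coincides with that of a straight segment of length $\|x\|$, namely $e^{-\beta\|x\|}$. The only delicate point is the potential double-counting of a blockage straddling the vertex $p$, which is negligible under the same approximation already used for the direct-path visibility preceding the lemma. Once this identification is accepted, the remainder is standard stochastic-geometry bookkeeping via the PGFL/void probability.
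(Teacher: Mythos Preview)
Your proposal is correct and follows essentially the same route as the paper: mirror-image the admissible wedge $A_1$ to $A_2$, observe that the reflected PPP retains intensity $\lambda$, identify $\{R_f>r_r\mid d,\theta,l\}$ with the event that no image point in $\mathcal{A}(\theta,d)$ is visible, replace visibility by its probability $e^{-\beta\|x\|}$, and apply the PGFL (equivalently, the void probability of the thinned process) before averaging over $l$. The only cosmetic difference is that the paper writes the product $\E\bigl[\prod_{x\in\mathcal{A}\cap\Phi}(1-\mathbf 1(V_{x\mid d}))\bigr]$ and invokes the PGFL directly, whereas you phrase the same step as independent thinning followed by a void probability; these are equivalent.

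Your discussion of the polyline issue is an enhancement over the paper, which simply cites \eqref{eq:block} at step~(a) without commenting on the fact that the reflected path has a bend. The integral-geometric justification you give (for an isotropic Boolean segment process the expected number of hits on a rectifiable curve depends only on its length) is the right way to close that gap, and the vertex double-counting caveat is exactly the approximation implicit in the paper.
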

 \begin{proof}
	Suppose that the user is located at the origin, $O$ and the nearest reflector is at a distance $d$ from it as shown in  Fig.\ref{fig:refldist}. Let the shortest visible reflecting BS after the transformation of $\phi$, $p^\prime\in\phi^\prime$, is at a distance $x$ from the user at origin. Remember $x$ is the total distance of the reflected path in the original PPP. Note that the distance to the nearest visible BS $p^\prime$ is larger than $x$ if and only if all the base stations located within the shaded region $\mathcal{A}(\theta,d)$ are not visible to the user. Let  $V_{x|d}$ denote the event that $x$ is visible given $d$. 	We have 
	\begin{align*}
	\P(R_f>r_r|d)
	&=\E\left[\prod_{x\in \mathcal{A}(\theta,d)\cap \Phi}\left(1-\one(V_{x|d})\right))\right],\\
	&\stackrel{(a)}{=}\E\left[ \prod_{x\subset \mathcal{A}(\theta,d)\cap \Phi}(1-e^{-\beta||x||})\right],\\
	&\stackrel{(b)}{=}\E_l\exp\left(- \int_{\mathcal{A}(\theta,d)}\lambda e^{-\beta ||x||}\d \textbf{x}\right),
	\end{align*}
	where $(a)$ follows from \eqref{eq:block} and $(b)$ follows from the PGFL property of a PPP. 
\end{proof}
Observe that  $\mathcal{A}(\theta,d)=B(0,r_r)\cap A_2(\theta,d)$ is a complicated region for integration. To simplify the integral further, we make the following assumption: {\em We assume that the reflectors are perpendicular to the line connecting the UE and the reflector's center,  \ie, $\theta =\pi/2$.} 

The distribution of the shortest reflecting path length is given in the following corollary. 

\begin{cor}\label{lem:rfconD}
	Given $\lambda$, $\lambda_B$ and $\lambda_R$ are the densities of BSs, blockages and reflectors respectively, the conditional distribution of the distance of shorted reflected path, $r_r$, when the nearest reflector is at a distance $d$ is given by,
	\begin{align*} 
	f_{R_f}(r_r|d)\approx \E_l\left[ \lambda\theta_d K^{'}(r_r)e^{-\lambda\theta_d K(r_r)}\right],
	\end{align*} 
	where $\theta_d=\arctan(l/2d)$ and $K(r_r)$ is given by
	\[ 
	K(r_r) 
	= 2\left[\frac{e^{-\beta d}}{\beta}(d+\frac{1}{\beta})-\frac{e^{-\beta r_r}}{\beta}(r_r+\frac{1}{\beta})\right]
	\] and $K^{'}(r_r)=2  \left(e^{-\beta r_r} \left(\frac{1}{\beta }+r_r\right)-\frac{e^{-\beta r_r}}{\beta }\right)$ is the first derivative of $K(r_r) $. Here $\E_l$ is the expectation with respect to length of the reflector.
\end{cor}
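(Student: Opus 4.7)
The plan is to specialise Lemma \ref{lem:rfconD} to $\theta=\pi/2$ and carry out the integral over $\mathcal{A}(\pi/2,d)=B(0,r_r)\cap A_2(\pi/2,d)$ in closed form. First I would set up coordinates with the UE at $O=(0,0)$ and the reflector centre at $(0,d)$, so the reflector itself is the horizontal segment $\{(x,d):|x|\le l/2\}$. A BS $p=(x_p,y_p)$ on the same side of the reflector as $O$ (i.e.\ $y_p<d$) reflects off some $M\in AB$ and reaches $O$ iff the straight line from $p$ to the mirror image $O^\star=(0,2d)$ of $O$ crosses $AB$; a short calculation shows this is equivalent to $|x_p|\le l(2d-y_p)/(2d)$. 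Applying the mirror map across $AB$ to this region gives $A_2(\pi/2,d)=\{(x,y):y>d,\ |x|\le ly/(2d)\}$, which is precisely the cone with apex at $O$ and half-angle $\theta_d=\arctan(l/(2d))$, truncated from below by the horizontal line $y=d$.

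Next I would approximate the flat inner boundary $y\ge d$ by the radial boundary $r\ge d$, so that $\mathcal{A}(\pi/2,d)\approx\{(r,\phi):d\le r\le r_r,\ |\phi|\le\theta_d\}$ is a simple polar sector; the error is of order $\theta_d^2$ and is negligible in the small-reflector regime $l\ll d$ already implicit in this section. In polar coordinates the inner integral of Lemma \ref{lem:rfconD} then factors as
$$\int_{\mathcal{A}(\pi/2,d)}\lambda e^{-\beta\|x\|}\,\d x \;\approx\; 2\lambda\theta_d\int_d^{r_r} r e^{-\beta r}\,\d r.$$
Integration by parts gives $\int r e^{-\beta r}\,\d r=-\beta^{-1}e^{-\beta r}(r+\beta^{-1})$, so evaluating between $d$ and $r_r$ produces exactly $\tfrac12 K(r_r)$ with $K$ as in the statement, and hence $\P(R_f>r_r\mid d)\approx \E_l\exp(-\lambda\theta_d K(r_r))$.

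The density then follows from $f_{R_f}(r_r\mid d)=-\partial_{r_r}\P(R_f>r_r\mid d)$; pushing the derivative inside $\E_l$ and noting that $\theta_d$ does not depend on $r_r$ yields the claimed expression with $K'(r_r)=2\bigl(e^{-\beta r_r}(\beta^{-1}+r_r)-\beta^{-1}e^{-\beta r_r}\bigr)$, which in fact simplifies further to $2r_r e^{-\beta r_r}$. The main obstacle I anticipate is the geometric step: the true near boundary of $A_2$ is the chord $y=d$ rather than the circular arc $r=d$, so the corollary's $\approx$ really reflects two compounded approximations (this one, plus the ``reflector centre visible implies reflector visible'' heuristic used just before the Lemma). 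Rigorously, both could be controlled by bounding the symmetric difference between the exact wedge and the idealised sector in terms of $\theta_d^2$ and showing its contribution to the exponent is a vanishing correction as $l/d\to0$.
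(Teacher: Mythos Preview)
Your proposal is correct and follows essentially the same route as the paper: specialise the Lemma to $\theta=\pi/2$, approximate $\mathcal{A}(\pi/2,d)$ by the polar sector $\{d\le r\le r_r,\ |\phi|\le\theta_d\}$, evaluate the radial integral to obtain $\lambda\theta_d K(r_r)$, and differentiate the resulting CCDF. The paper phrases the geometric approximation as ``replace the straight reflector by an arc of length $l$ at radius $d$'' rather than your ``replace the chord $y=d$ by the arc $r=d$'', but these are the same replacement viewed from opposite sides; your explicit description of $A_2(\pi/2,d)$ and the observation that $K'(r_r)=2r_re^{-\beta r_r}$ are additional detail the paper omits.
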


\begin{proof}
	\begin{align*}
	\P(R_f>r_r|d)
	&=\exp\left(- \int_{\mathcal{A}}\lambda e^{-\beta ||x||}\d \textbf{x}\right)\\
	&\stackrel{(a)}{\approx}\exp\left(- \int_{\theta = -\theta_d}^{\theta_d} \int_{r=d}^{r_r} \lambda e^{-\beta r}r \d r \d \theta \right)\\
	&=\exp(-\lambda\theta_d K(r_r)).
	\end{align*}
	Here in (a), area of shaded region $\mathcal{A}\approx \theta_d (r_r^2-d^2)$ by approximating the reflector perpendicular to $d$ as an arc of length $l$ which subtend $\angle COB=\theta_d$. Then by finding negative derivative, $-\frac{\d}{\d r}\P(R_f>r_r|d)$ gives the PDF.
\end{proof}
\begin{figure}
	\begin{center}
	\includegraphics[scale=0.08]{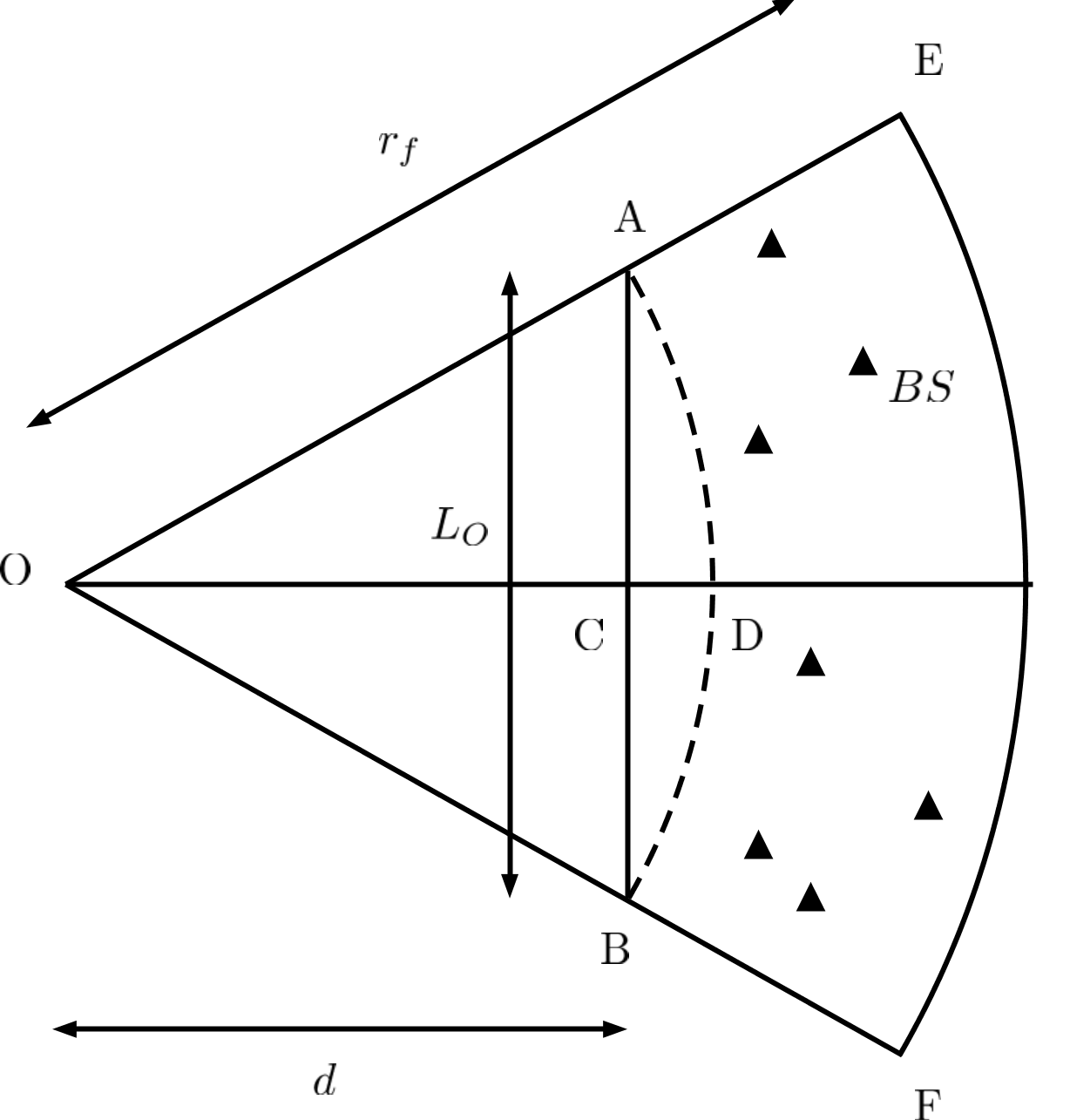}
	\end{center}
	\caption[]{Illustration of approximation of shortest reflection path distribution. $AB$ is the reflector of length $l$ and at a distance $OC=d$ from the user at $O$. Also $AB$ is assumed perpendicular to the radial line from user to reflector centre. The region $\mathcal{A}$ over which integration is done is approximated to the region between the two arcs, labelled $ABFE$. The nearest reflected BS is $p^\prime$ which is at a distance $r_r$. }
	\label{fig:refldist}
\end{figure}

Assuming that that the shortest reflected path is through the closest visible reflector, the distribution of $r_r$ can be  obtained  by unconditioning the conditional density function using the following property of conditional and joint distributions functions, $f(x,y)=f(x|y)f(y)$. 
Consider the CCDFs of the direct path length and reflected path length, it can be easily verified that  \[\P[R_d=\infty]=e^{-\frac{2\pi\lambda}{\beta^2}},\] 
which implies that there is a finite probability that there is no direct path to the typical user at the origin. This might happen when all the BSs to the user  at the origin are blocked. Similarly,  
\begin{align*}
\P[R_r=\infty| d]&=\E_le^{- \int_{\mathcal{A}_2(\theta,d)}\lambda e^{-\beta ||x||}\d \textbf{x}}\\
 &\approx \E_l\left[ \exp(-2\lambda \theta_d e^{-\beta d}(d/\beta + 1/\beta^2)  ) \right],
\end{align*}
 which is finite and non-zero for all combinations of  $\lambda,\lambda_o$. This implies that there is no reflected path. This result is different from the the conventional cellular case, $\P[R>r]=e^{-\lambda r^2}$ which tends to zero as $r\to \infty$. The presence of blockages not just increases the shortest connected path but can also make a user to be uncovered by the network. Mean distances  are plotted in Figure \ref{fig:meandist}. From Figure \ref{fig:meandist}, we can see that both  the direct distance and the reflected distances are decreasing with increasing BS density.   We can also see that when the dimension of reflectors increases, the shortest reflected path distance decreases.  That makes sense as as it is more likely for a user to get more reflections  with increasing length of reflectors.  
\begin{figure}[h]
	\centering 
	\begin{tikzpicture}
	\begin{axis}[scale=0.95,
	grid = both,
	legend style={
		cells={anchor=west},
		legend pos= north east ,
	},
	xlabel ={$\log(\lambda)$},
	ylabel = {Mean distances}
	]
	\addplot[mark=square*, mark options={fill=black},thick, mark size=2.5]
	coordinates{
		(-4.000000, 54.247700) (-3.500000, 29.379100) (-3.000000, 16.197800) (-2.500000, 9.011470) (-2.000000, 5.037610) (-1.500000, 2.823540) (-1.000000, 1.584870) (-0.500000, 0.890317) (0.000000, 0.500372) (0.500000, 0.281288) (1.000000, 0.158151) (1.500000, 0.088926) (2.000000, 0.050004) 
	};\addlegendentry{$U(1,10),\delta=0.2$ D}
	
	\addplot[mark=triangle*, mark options={fill=white},thick,  mark size=3.5]
	coordinates{
		(-4.000000, 272.288000) (-3.500000, 190.566000) (-3.000000, 107.792000) (-2.500000, 66.077800) (-2.000000, 48.555000) (-1.500000, 41.452900) (-1.000000, 38.779200) (-0.500000, 37.846100) (0.000000, 37.537000) (0.500000, 37.437400) (1.000000, 37.405600) (1.500000, 37.395600) (2.000000, 37.392400) 
	};\addlegendentry{$U(1,10),\delta=0.2$ R}

	\addplot[mark=square*, mark options={fill=white},thick,red, mark size=2.5]
	coordinates{
		(-4.000000, 61.096600) (-3.500000, 39.086400) (-3.000000, 18.428700) (-2.500000, 9.620790) (-2.000000, 5.217200) (-1.500000, 2.878270) (-1.000000, 1.601840) (-0.500000, 0.895624) (0.000000, 0.502040) (0.500000, 0.281814) (1.000000, 0.158317) (1.500000, 0.088978) (2.000000, 0.050020) 
	};\addlegendentry{$U(5,50),\delta=0.5$ D}

	\addplot[mark=triangle*, mark options={fill=red},thick,red,  mark size=3.5]
	coordinates{
		(-4.000000, 243.604000) (-3.500000, 145.793000) (-3.000000, 78.265100) (-2.500000, 46.840000) (-2.000000, 32.844700) (-1.500000, 26.818500) (-1.000000, 24.437600) (-0.500000, 23.576900) (0.000000, 23.285800) (0.500000, 23.191100) (1.000000, 23.160800) (1.500000, 23.151200) (2.000000, 23.148100) 
	};\addlegendentry{$U(1,10),\delta=0.5$ R}

	\addplot[mark=halfcircle*, mark options={fill=blue},thick,blue, mark size=3]
	coordinates{
		(-4.000000, 29.293900) (-3.500000, 51.385000) (-3.000000, 61.345200) (-2.500000, 58.887100) (-2.000000, 54.690100) (-1.500000, 52.613200) (-1.000000, 51.897700) (-0.500000, 51.440000) (0.000000, 51.449700) (0.500000, 51.308800) (1.000000, 51.659800) (1.500000, 51.680400) (2.000000, 51.687900) 
	};\addlegendentry{$U(10,50),\delta=0.2$ R}
	
	\addplot[mark=diamond*,thick,mark options={fill=black},black, mark size=3.5]
	coordinates{
		(-4.000000, 40.665700) (-3.500000, 59.679400) (-3.000000, 53.313500) (-2.500000, 39.646800) (-2.000000, 32.440000) (-1.500000, 29.803500) (-1.000000, 28.918900) (-0.500000, 28.629900) (0.000000, 28.535200) (0.500000, 28.503800) (1.000000, 28.495700) (1.500000, 28.493200) (2.000000, 28.495000) 
	};\addlegendentry{$U(10,50),\delta=0.5$ R}

	\end{axis}
	\end{tikzpicture}
	\caption{Mean of shortest direct path and reflected path lengths Vs BS density for  $\lambda_B=10^{-3}$, $\lambda_R=10^{-3}$, \ie, $\lambda_o=2\times 10^{-3}$ with different $\delta$ and different dimensions for objects. The dimension of objects are distributed uniformly, $l\sim U(L_1,L_2)$.}
	\label{fig:meandist}
\end{figure}
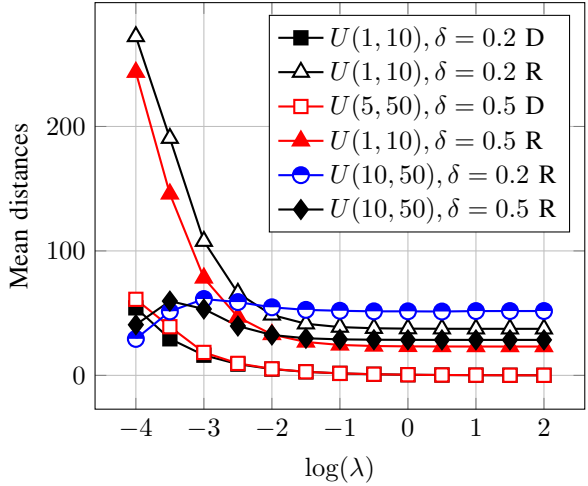

\subsection{Association Probabilities}
Since the association is assumed to be nearest connectivity either through direct visible path or a reflected path, the probability of direct connectivity for a typical user is,
\begin{align}
\label{eq:PrD}
p_d=\P(r_d>r_r)=\int_{0}^{\infty}\int_{r_r}^{\infty} f_{R_f}(r_r)f_{R_d}(r_d)\d r_d \d r_r
\end{align}
Similarly, the probability of reflected connectivity is,
\begin{align}
\label{eq:PrR}
p_r=\P(r_d<r_r)=\int_{0}^{\infty}\int_{r_d}^{\infty} f_{R_f}(r_r)f_{R_d}(r_d) \d r_r \d r_d
\end{align}

\section{Coverage Probability}\label{sec:Cov}

Coverage probability for a user is defined as the probability that the signal received by the user has an $\sinr$ greater than a threshold, $T$ to establish the connectivity. In theorem.\ref{th:main} we will provide the coverage probability for a typical user.  
\begin{theorem}\label{th:main}
	The coverage probability $\P_C(T)$ for a user connected to base stations either through the direct path or the reflected path is 
	\begin{align}
	\label{eq:Main}
	\P_C(T) = P_{D}(T) + P_{R}(T),
	\end{align}	
where $P_{D}(T)$ is the coverage probability of a user connected through the direct path, given by,

\begin{align*}P_{D}(T) &= \E_{r_d<r_r}\Big[e^{-2\pi\lambda\int_{r_d}^{\infty}\left( \frac{Tr_d^\alpha r^{-\alpha}e^{-\beta r}}{1+Tr_d^\alpha r^{-\alpha}}\right)r\d r}\Big.\nonumber\\
&\Big.\times \left(1-\frac{Tr_d^\alpha r_r^{-\alpha}e^{-\beta r_r}}{1+Tr_d^\alpha r_r^{-\alpha}}\right) e^{-r_d^{\alpha}{\sigma^2}T} \Big].
\end{align*}
$P_{R}(T)$ is the coverage probability of a user connected through the reflected path, given by,
\[
P_{R}(T) = \E_{r_d>r_r}\Big[e^{-2\pi\lambda\int_{r_r}^{\infty}\left(\frac{Tr_r^\alpha r^{-\alpha}e^{-\beta r}}{1+Tr_r^\alpha r^{-\alpha}}\right)r\d r} e^{-r_r^{\alpha}{\sigma^2}T}\Big].
\]
 \end{theorem}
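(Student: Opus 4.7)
The plan is to decompose the coverage event by association type and then handle each term with the standard Rayleigh-fading Laplace-transform machinery. Since the indicators $\one(r_d<r_r)$ and $\one(r_r<r_d)$ partition the sample space almost surely in the $\sinr$ expression, linearity of expectation gives
\[
\P_C(T) = \P(\sinr > T,\, r_d<r_r) + \P(\sinr > T,\, r_r<r_d) \;=:\; P_D(T) + P_R(T),
\]
which is exactly \eqref{eq:Main}. The two summands are then treated separately with parallel arguments, so I will describe the direct-path case in detail and note the modifications for the reflected one.

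For $P_D(T)$ I would condition on $(r_d, r_r)$ with $r_d<r_r$. Since $|h_d|^2$ is exponential with unit mean,
\[
\P(\sinr>T\mid r_d,r_r,\Phi) = \exp\bigl(-T r_d^{\alpha}(\sigma^2 + I_D + I_R)\bigr)
\]
after absorbing $P_t$. Averaging over fading and over $\Phi$ factorises this into $e^{-T r_d^\alpha\sigma^2}\,\L_{I_D}(Tr_d^\alpha)\,\L_{I_R}(Tr_d^\alpha)$. For $\L_{I_D}$, the candidate interferers live in $\Phi\setminus B(0,r_d)$ and each contributes only when visible, which by \eqref{eq:block} happens independently with probability $e^{-\beta\|x\|}$. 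Applying the PGFL of $\Phi$ together with the Laplace transform of the exponential fading power yields the familiar exponential-integral factor $\exp(-2\pi\lambda\int_{r_d}^{\infty}\frac{T r_d^\alpha r^{-\alpha}e^{-\beta r}}{1+T r_d^\alpha r^{-\alpha}}\, r\,\d r)$. For $\L_{I_R}$ I would model the shortest reflected path as a single virtual interferer at distance $r_r$ carrying Rayleigh fading, present with visibility probability $e^{-\beta r_r}$; averaging over the independent fading and visibility gives
\[
\L_{I_R}(s)=(1-e^{-\beta r_r})+\frac{e^{-\beta r_r}}{1+s r_r^{-\alpha}}=1-\frac{s r_r^{-\alpha}e^{-\beta r_r}}{1+s r_r^{-\alpha}},
\]
which is precisely the bracketed factor in the theorem. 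Taking expectation over $(r_d, r_r)$ on the event $\{r_d<r_r\}$ yields $P_D(T)$. The derivation of $P_R(T)$ mirrors this with serving distance $r_r$ (so $s=T r_r^\alpha$) and the direct-interference PGFL integrated from $r_r$ outward; the reflected-interference factor is dropped because the model tracks only a single reflected path.

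The main obstacle is justifying the independence assumptions hidden in the factorisation $\L_{I_D}\L_{I_R}$. The blockage PPP $\Phi_B$ jointly determines $r_d$, $r_r$ (via Lemma~1 and its corollary) and the visibility of every candidate interferer entering $I_D$ and $I_R$, so these events are strictly correlated. To obtain the clean product one must argue that, conditional on $(r_d,r_r)$, the visibility indicator of each remaining BS is still Bernoulli with parameter $e^{-\beta\|x\|}$ and independent across $x$\,---\,the same simplification that underlies the direct-path analysis imported from \cite{bai2012}. A closely related subtlety is the lower integration limit $r_r$ (rather than $r_d$) in the direct-interference integral of $P_R(T)$: since by conditioning the closest visible BS sits at $r_d>r_r$, this choice is exact only under the same conditional-independence approximation, and should be flagged as a modelling simplification rather than an identity.
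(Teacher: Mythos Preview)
Your proposal is correct and follows essentially the same route as the paper: decompose by association type, exploit the unit-mean exponential distribution of $|h|^2$ to turn the coverage probability into a Laplace transform of the aggregate interference plus noise, average the per-interferer factor over fading and visibility to get $1-\frac{Tr_d^\alpha\|x\|^{-\alpha}e^{-\beta\|x\|}}{1+Tr_d^\alpha\|x\|^{-\alpha}}$, and then apply the PGFL of $\Phi$ over $\{\|x\|>r_d\}$; the single reflected interferer at $r_r$ is handled exactly as you describe. Your closing remarks on the conditional-independence approximation and on the lower limit $r_r$ in $P_R(T)$ actually go beyond what the paper makes explicit, but they accurately identify modelling simplifications that the paper silently adopts.
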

\begin{proof}
See Appendix.
\end{proof}

%
%

\section{Results and Discussion}\label{sec:Res}

In this Section, we numerically evaluate the coverage probability given in Theorem \ref{th:main} and compare with our theoretical derivations. We simulated a square area in which base stations, blockages and reflectors are distributed according to  PPPs such that there are at least $100$ base stations on average in the area  and $\alpha =4$. The objects (blockages and reflectors) had lengths chosen from a uniform distribution, $U(L_1,L_2)$ and the orientation of reflectors and blockages are uniformly distributed \ie, $\theta\sim U(0,2\pi)$.  It must be noted that our approximations in our analysis is valid only for cases in which the length of objects is comparable to, or smaller than, the mean distance between objects, \ie, $\min(\frac{1}{2\sqrt{\lambda_O}},\frac{1}{2\sqrt{\lambda}})$. This is because in analysis we approximated reflector as an arc, however this makes the objects not overlapping  as in the practical scenario.

The average lengths of shortest visible direct path and  the reflected path through nearest visible reflector are given in Table \ref{tab:dist_meandelta}. We observe that as the reflector density is increased, the reflected path length shortens as there are more reflectors. The shortest direct path length is not varying as the density of blockages remain the same. We also observe that undercertain configurations, the average length of the reflected path is very similar to the length of the direct path. 
\begin{table}[!ht]
	\renewcommand{\arraystretch}{1.2}
	\centering \caption{Mean distances of shortest direct path and reflected path for different fractions of reflectors. $r_c$ denotes the connected BS distance in no blockage and no reflectors case.} 
	\label{my-label}
	\begin{tabular}{|c|c|c|c|c|c|c|}
		\hline
		\multirow{2}{*}{$(\lambda, \lambda_o)$}&\multirow{2}{*}{$l$} &\multirow{2}{*}{$r_c$} &\multicolumn{2}{c|}{$r_d$} & \multicolumn{2}{c|}{$r_r$} \\ \cline{4-7} 
		&	& & $\delta=0.2$       & 0.5       & 0.2       & 0.5       \\ \hline
		$(10^{-1},10^{-1})$& (1,5)	& 1.58 & 1.84  & 1.84 &  6.03 &  5.94  \\ \hline
		$(10^{-1},10^{-2})$&(1,5) & 1.58	& 1.61    & 1.61  & 16.23  & 10.70  \\ \hline
		$(10^{-2},10^{-2})$&(1,10) & 5 & 5.42 & 5.42 & 29.49 & 23.69 \\ \hline
		$(10^{-3},10^{-3})$& (1,10) & 15.81 & 16.19 & 16.19 & 151.94 & 112.29 \\ \hline
	\end{tabular}
	\label{tab:dist_meandelta}
\end{table}

From the simulations and theory, the probability of the user connecting to the direct path is very high for most cases, so mostly the user will be tagged to nearest visible BS instead of getting connected through a reflected path. These probabilities are plotted in Figure \ref{fig:probcon}.
\begin{figure}[h]
	\centering 
	\begin{tikzpicture}
	\begin{axis}[scale=1,
	grid = both,
	legend style={at={(0.98,0.75)}},
	xlabel ={$\delta$},
	ylabel = {$p_d,p_r$}
	]
	
	
	\addplot[mark=square*, mark options={fill=red},thick, mark size=2.5]
	coordinates{
		(0.200000, 0.97743) (0.500000, 0.950739) (0.800000, 0.933093) 
	};\addlegendentry{$(10^{-3},10^{-3})$}
	
	\addplot[mark=halfcircle*, mark options={fill=blue},thick, mark size=3]
	coordinates{
		(0.200000, 0.997619) (0.500000, 0.994164) (0.800000, 0.990827) 
	};\addlegendentry{$(10^{-3},10^{-4})$}
	
	\addplot[mark=triangle*, mark options={fill=blue},thick,  mark size=3]
	coordinates{
		(0.200000, 0.992) (0.500000, 0.981) (0.800000, 0.97) 
	};\addlegendentry{$(10^{-1},10^{-2})$}
	
	\addplot[mark=diamond*, mark options={fill=white},thick, mark size=3]
	coordinates{
		(0.200000, 0.905) (0.500000, 0.897) (0.800000, 0.869) 
	};\addlegendentry{$(10^{-2},10^{-2})$}
	
	\addplot[mark=square*,dashed, mark options={fill=blue},thick, mark size=2.5]
	coordinates{
		(0.200000, 0.106121) (0.500000, 0.168386) (0.800000, 0.202862) 
	};\addlegendentry{$(10^{-3},10^{-2})$}
	
	\addplot[mark=square*,dashed, mark options={fill=red},thick, mark size=2.5]
	coordinates{
		(0.200000, 0.021239) (0.500000, 0.0460507) (0.800000, 0.0654729) 
	};
	
	\addplot[mark=halfcircle*, mark options={fill=blue},thick,only marks, mark size=3]
	coordinates{
		(0.200000, 0.00237741) (0.500000, 0.00583564) (0.800000, 0.00917199) 
	};
	
	\addplot[mark=triangle*,dashed, mark options={fill=blue}, thick, mark size=3]
	coordinates{
		(0.200000, 0.007) (0.500000, 0.018) (0.800000, 0.028) 
	};
	
	\addplot[mark=diamond*,dashed, mark options={fill=white},thick, mark size=3]
	coordinates{
		(0.200000, 0.042) (0.500000, 0.0888) (0.800000, 0.1237) 
	};
	
	\end{axis}
	\end{tikzpicture}
	\caption{Probability of connecting through shortest direct path and shortest reflected path Vs different relative reflector density factor, $\delta$. The BS density and objects densities are given in as $(\lambda,\lambda_o)$, length of objects $l\sim U(L_1,L_2)$m, solid lines represent the direct connection and dashed lines for reflected connection.}
	\label{fig:probcon}
\end{figure}
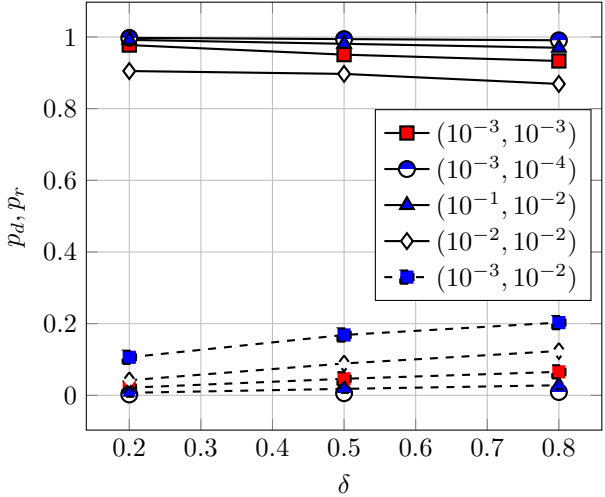

As expected, the probability of the user connecting to the reflected path increases  in urban environments with a high density of reflecting objects, such as metallic objects, as compared to the density of base stations. This can be seen from  $(10^{-3}, 10^{-2}),\delta=0.2$ and $(10^{-3}, 10^{-2}),\delta=0.5$. This implies that a measurable gain in coverage is possible by reflectors.  This is can be verified from coverage probability Fig.\ref{fig:CovCompCBR2}.
Also, our theory does not consider reflections from the reflectors other than the closest one. When the density of reflector is high, it is likely that the user can connect to a BS through a shortest reflected path from reflector other than the closest reflector as well. 


In Figures \ref{fig:CovCompCBR} and \ref{fig:CovCompCBR2}, the coverage probability is plotted for various scenarios. We first observe that the MonteCarlo simulations match the results. We also observe that the improvement in coverage is very minor because of the reflectors.  For reference, we have also  plotted the coverage probability of  a  networks without any blockages and reflectors.

\begin{figure}[ht]
	\centering
	\begin{tikzpicture}
	\begin{semilogyaxis}[scale=1,
	grid = both,
	legend style={
		cells={anchor=west},
		legend pos=south west ,
	},
	legend style={font=\tiny},
	xlabel = $\sinr$ threshold $T$,
	ylabel = $\P(\sinr>T)$
	]
	
	\addplot[mark=square*, mark options={fill=red}, mark size=2.5]
	coordinates{
		(-5.000000, 0.776355) (-2.000000, 0.652226) (1.000000, 0.513960) (4.000000, 0.384993) (7.000000, 0.279634) (10.000000, 0.200050) (13.000000, 0.142193) (16.000000, 0.100814) (19.000000, 0.071409) 
	};\addlegendentry{No blockages/reflectors}
	
	\addplot[mark=triangle*, mark options={fill=blue},  mark size=3]
	coordinates{
		(-5.000000, 0.797718) (-2.000000, 0.679709) (1.000000, 0.543625) (4.000000, 0.412560) (7.000000, 0.302768) (10.000000, 0.218336) (13.000000, 0.156171) (16.000000, 0.111289) (19.000000, 0.079161) 
	};\addlegendentry{$\lambda=10^{-1}, \delta=0$}
	
	\addplot[mark=diamond*, mark options={fill=white}, mark size=3]
	coordinates{
		(-5.000000, 0.790680) (-2.000000, 0.670915) (1.000000, 0.534142) (4.000000, 0.403655) (7.000000, 0.295205) (10.000000, 0.212297) (13.000000, 0.151519) (16.000000, 0.107784) (19.000000, 0.076558) 
	};\addlegendentry{$\lambda=10^{-1}, \delta=0.2$}

	\addplot[mark=halfcircle*, mark options={fill=black}, mark size=3]
	coordinates{
		(-4.000000, 0.754026) (-1.000000, 0.625689) (2.000000, 0.488160) (5.000000, 0.363764) (8.000000, 0.263938) (11.000000, 0.189059) (14.000000, 0.134665) 
	};\addlegendentry{$\lambda=10^{-1}, \delta=0.5$}

	\addplot[mark=oplus, mark options={fill=black},thick,only marks, mark size=3]
	coordinates{
		(-5.000000, 0.811000) (-2.000000, 0.678000) (1.000000, 0.548000) (4.000000, 0.416000) (7.000000, 0.297000) (10.000000, 0.221000) (13.000000, 0.151000) (16.000000, 0.106000) (19.000000, 0.070000) 
	};\addlegendentry{Monte Carlo: $\lambda=10^{-1},  \delta=0.5$}
	
	\addplot[mark=*, mark options={fill=blue},  mark size=2.5]
	coordinates{
		(-5.000000, 0.612416) (-2.000000, 0.559091) (1.000000, 0.487280) (4.000000, 0.404908) (7.000000, 0.323247) (10.000000, 0.250678) (13.000000, 0.190635) (16.000000, 0.143097) (19.000000, 0.106482) 
	};\addlegendentry{$\lambda=10^{-2},\delta=0.5$}
	
	\addplot
	coordinates{
			(-5.000000, 0.070000) (-2.000000, 0.064000) (1.000000, 0.058000) (4.000000, 0.050000) (7.000000, 0.045000) (10.000000, 0.036000) (13.000000, 0.032000) (16.000000, 0.026000) (19.000000, 0.020000) 
	};\addlegendentry{$\lambda=10^{-3},\delta=0.5$}
	
	\end{semilogyaxis}
	\end{tikzpicture}
	\caption[]{Comparison of coverage probabilities for cellular case without blockages and reflections and with blockages only and with both blockages and reflections. The BS density $\lambda=10^{-1}$, total object density $\lambda_o=10^{-2}$ of which $\delta$ percentage of objects are reflecting. The length of objects are assumed to be uniformly distributed in all cases. }
	\label{fig:CovCompCBR}
\end{figure}
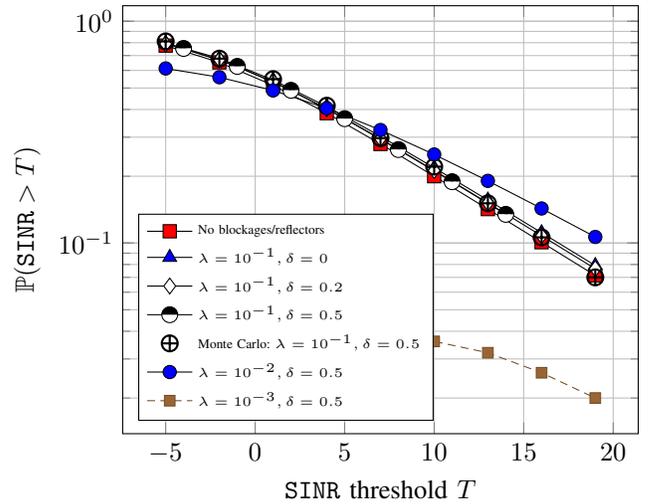

\begin{figure}[ht]
	\centering
	\begin{tikzpicture}
	\begin{semilogyaxis}[scale=1,
	grid = both,
	legend style={
		cells={anchor=west},
		legend pos=south west ,
	},
	legend style={font=\tiny},
	xlabel = $\sinr$ threshold $T$,
	ylabel = $\P(\sinr>T)$
	]
	
	\addplot[mark=square*, mark options={fill=red}, mark size=2.5]
	coordinates{
		(-5.000000, 0.776355) (-2.000000, 0.652226) (1.000000, 0.513960) (4.000000, 0.384993) (7.000000, 0.279634) (10.000000, 0.200050) (13.000000, 0.142193) (16.000000, 0.100814) (19.000000, 0.071409) 
	};\addlegendentry{No blockages/reflectors}
	
	\addplot[mark=triangle*, mark options={fill=blue},  mark size=3]
	coordinates{
		(-5.000000, 0.797718) (-2.000000, 0.679709) (1.000000, 0.543625) (4.000000, 0.412560) (7.000000, 0.302768) (10.000000, 0.218336) (13.000000, 0.156171) (16.000000, 0.111289) (19.000000, 0.079161) 
	};\addlegendentry{$\delta=0$}
	

	\addplot[mark=halfcircle*, mark options={fill=black}, mark size=2.5]
	coordinates{
		(-5.000000, 0.781265) (-2.000000, 0.662687) (1.000000, 0.527596) (4.000000, 0.398729) (7.000000, 0.291550) (10.000000, 0.209563) (13.000000, 0.149448) (16.000000, 0.106206) (19.000000, 0.075356) 
	};\addlegendentry{$\delta=0.5$}
	
	\addplot[mark =*,  mark size=2.5,thick]
	coordinates{
		(-5.000000, 0.775353) (-2.000000, 0.652035) (1.000000, 0.514338) (4.000000, 0.385612) (7.000000, 0.280269) (10.000000, 0.200603) (13.000000, 0.142639) (16.000000, 0.101159) (19.000000, 0.071670) 
	};\addlegendentry{$\lambda_o=10^{-4}, \delta=0.2$}
	
	\addplot[mark=square*, mark options={fill=white},thick, mark size=2.5]
	coordinates{
		(-5.000000, 0.260684) (-2.000000, 0.239560) (1.000000, 0.212225) (4.000000, 0.180930) (7.000000, 0.149221) (10.000000, 0.120040) (13.000000, 0.094887) (16.000000, 0.074108) (19.000000, 0.057422) 
	};\addlegendentry{$\lambda_o=10^{-2}, \delta=0.2$}
	
	\addplot[mark=oplus, mark options={fill=blue},thick, mark size=3]	
	coordinates{
		(-5.000000, 0.362997) (-2.000000, 0.333625) (1.000000, 0.295814) (4.000000, 0.252604) (7.000000, 0.208779) (10.000000, 0.168331) (13.000000, 0.133326) (16.000000, 0.104276) (19.000000, 0.080842) 
	};\addlegendentry{$\lambda_o=10^{-2}, \delta=0.5$}

	\addplot[mark=diamond*, mark options={fill=white}, mark size=3,only marks]
	coordinates{
		(-5.000000, 0.807000) (-2.000000, 0.683000) (1.000000, 0.533000) (4.000000, 0.404000) (7.000000, 0.287000) (10.000000, 0.202000) (13.000000, 0.144000) (16.000000, 0.108000) (19.000000, 0.074000) 
	};
	
%
	%
	\end{semilogyaxis}
	\end{tikzpicture}
	\caption[]{Comparison of coverage probabilities for cellular case without blockages and reflections and with blockages only and with both blockages and reflections. The BS density $\lambda=10^{-3}$, total object density $\lambda_o=10^{-3}$ of which $\delta$ percentage of objects are reflecting. The length of objects are assumed to be uniformly distributed in all cases. \ie, $l\sim U(1,10)$m.}
	\label{fig:CovCompCBR2}
\end{figure}
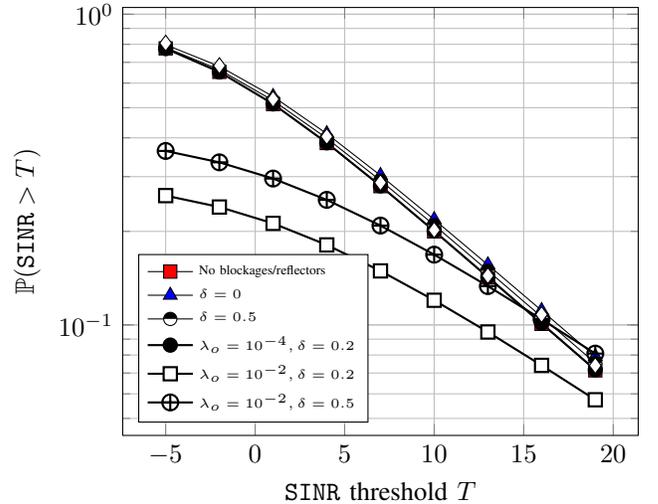
When the number of objects is increased while keeping the number of base stations fixed, there are multiple effects to be considered. There will be a higher probability of blockage due to higher number of objects, and added reflectors can contribute useful signal as well as interference. It is well established result that, the coverage probability of cellular network in interference limited scenario is independent of the BS density \cite{ganti}. It can be seen that the presence of objects changes the coverage probability with density of BS. For high dense network with moderate density of blockages and reflectors improve the coverage probability, intuitively we can say that blockages reduces the interfering signals. For low density networks, presence of high density of reflectors and blockages reduces the coverage probability drastically as most of the links are blocked by the objects. For dense networks when the blockages and reflectors are introduced, the coverage probability improves, because of the eliminating a large portion of interference at the same time direct strong interference can also be blocked and when the fraction of reflecting objects are increased the effect of interference through the reflected path increases which causes the coverage to be reduced. In this study we have considered only primary reflections and in analysis we have incorporated only the reflection caused by the nearest visible BSs. It can be observed that strong reflections can also be caused by other nearby reflectors also, which can also cause the coverage to be varied. 

The reflected signal almost covers double the distance of that of direct path most of the cases and hence for the cases with low density of reflectors, the reflectors will not affect the coverage probability, but when the density of the reflector is comparable or more than BS density, interference starts to dominate. This can be seen for the case of $\delta=0.5$.

\section{Conclusion}
In this paper, we have proposed a method to model and analyse the reflections in a mmwave cellular system. We have analysed the coverage probability for cellular networks considering the effects of reflections and blockages. It is found that the coverage probability is sensitive to the presence of objects.  It is noticed that presence of high density of reflectors can improve the coverage in high density networks and in low density networks the reflected signals has to travel longer distances than that of direct path and coverage probability has no further changes from that of blockages. Also it should be mentioned that in this analysis the reflectors are placed randomly and we believe that proper design and placement of reflectors can improve the performance of the network.
 
\bibliographystyle{IEEEtran}
\bibliography{Bib_mmWave}

\appendix
  The user in this model connects to both the reflected path and direct path, if they exist. Given the nearest direct BS is at a distance, $r_d$ from the user and the nearest reflecting BS is at a distance $r_r$, we have
\begin{align*}
\P[\sinr>T]
&=\E_{r_d,r_r}\Big[\P[\sinr_D>T|r_d<r_r]\\
&+\P[\sinr_R>T|r_d>r_r]\Big],
\end{align*}
where $\sinr_D = \frac{|h|^2 r_d^{-\alpha}}{\sigma^2+\I_D}$ is the $\sinr$ of user connected to visible BS and $\I_D$ interference seen by the direct connected UE experienced from the reflectors other than the closest reflector will be highly attenuated, we have interference for direct path, \[\I_D=\sum_{\|x\|>r_d,x\in\Phi}\|x\|^{-\alpha}|h_x|^2S_x + r_r^{-\alpha}|h_f|^2S_{r_r}\]
Now
\begin{align}
\lefteqn {\P[\sinr_D>T|r_d,r_r]
= \P\left[|h|^2>Tr_d^\alpha (\I_D+\sigma^2)\right]}\nonumber\\
&\stackrel{(a)}{=}\E\Big[\left(\prod_{\|x\|>r_d,x\in\Phi}e^{-Tr_d^\alpha \|x\|^{-\alpha}|h_x|^2}e^{-\beta \|x\|}\right)+1-e^{-\beta \|x\|} \Big]\nonumber\\
&\times \Big. \E_{h_f}\left(e^{-Tr_d^\alpha r_r^{-\alpha}|h_f|^2 }e^{-\beta r_r}+(1-e^{-\beta r_r})\right) e^{-r_d^{\alpha}{\sigma^2}T}\Big. ,\nonumber\\
&\stackrel{(b)}{=}\E\Big[\left(\prod_{\|x\|>r_d,x\in\Phi}1-\frac{e^{-\beta \|x\|}Tr_d^\alpha\|x\|^{-\alpha}}{1+Tr_d^\alpha\|x\|^{-\alpha}}\right) \Big]\nonumber\\
&\times\left( \frac{1}{1+Tr_d^\alpha r_r^{-\alpha}}e^{-\beta r_r}+(1-e^{-\beta r_r})\right) e^{-r_d^{\alpha}{\sigma^2}T},\nonumber\\
&\stackrel{(c)}{=}\exp\left(-2\pi\lambda\int_{r_d}^{\infty}\left( \frac{Tr_d^\alpha r^{-\alpha}e^{-\beta r}}{1+Tr_d^\alpha r^{-\alpha}}\right)r\d r\right)\nonumber\\
&\times \left(1-\frac{Tr_d^\alpha r_r^{-\alpha}e^{-\beta r_r}}{1+Tr_d^\alpha r_r^{-\alpha}}\right) e^{-r_d^{\alpha}{\sigma^2}T}.
\end{align}
Here (a) by using the fact that $|h|^2\sim \exp(1)$ for Rayleigh fading and (b) by using the PGFL property of PPP.
Now consider the reflected $\sinr_R$, we have,
\[\I_R=\sum_{\|x\|>r_d,x\in\Phi}\|x\|^{-\alpha}|h_x|^2e^{-\beta \|x\|}\] Similar to the above derivation, 
\begin{align}
\lefteqn{\P[\sinr_R>T|r_d,r_r]
= \P\left[|h|^2>Tr_r^\alpha (\I_R+\sigma^2)\right]}\nonumber\\
&= \E\Big[e^{-r_r^{\alpha}{\sigma^2}T}\prod_{\|x\|>r_d,x\in\Phi}\exp\left(-Tr_r^\alpha \|x\|^{-\alpha}|h_x|^2S_x \right)\Big]\nonumber\\
&=e^{-r_r^{\alpha}{\sigma^2}T}\exp\left(-2\pi\lambda\int_{r_d>r_r}^{\infty}\left(\frac{Tr_r^\alpha r^{-\alpha}e^{-\beta r}}{1+Tr_r^\alpha r^{-\alpha}}\right)r\d r\right).
\end{align}

%
%
%
%
%
%
%
%
\end{document}